\newtheorem{theorem}{Theorem}
\newtheorem{corollary}{Corollary}
\newtheorem{lemma}{Lemma}
\newtheorem{remark}{Remark}
\begin{document}

\title{%
   Multiphoton, multimode state classification for nonlinear optical circuits
}

\author{Denis A. Kopylov}
    \email{denis.kopylov@uni-paderborn.de}
    \affiliation{Institute for Photonic Quantum Systems (PhoQS), Paderborn University, Warburger Stra\ss{}e 100, D-33098 Paderborn, Germany}
    \affiliation{Department of Physics, Paderborn University, Warburger Stra\ss{}e 100, D-33098 Paderborn, Germany}

\author{Christian Offen}
    \affiliation{Institute for Photonic Quantum Systems (PhoQS), Paderborn University, Warburger Stra\ss{}e 100, D-33098 Paderborn, Germany}
    \affiliation{Department of Mathematics, Paderborn University, Warburger Stra\ss{}e 100, D-33098 Paderborn, Germany}

\author{Laura Ares}
    \affiliation{Institute for Photonic Quantum Systems (PhoQS), Paderborn University, Warburger Stra\ss{}e 100, D-33098 Paderborn, Germany}
    \affiliation{Department of Physics, Paderborn University, Warburger Stra\ss{}e 100, D-33098 Paderborn, Germany}

\author{Boris Wembe Moafo}
    \affiliation{Department of Mathematics, Paderborn University, Warburger Stra\ss{}e 100, D-33098 Paderborn, Germany}

\author{Sina Ober-Bl\"obaum}
    \affiliation{Department of Mathematics, Paderborn University, Warburger Stra\ss{}e 100, D-33098 Paderborn, Germany}

\author{Torsten Meier}
    \affiliation{Institute for Photonic Quantum Systems (PhoQS), Paderborn University, Warburger Stra\ss{}e 100, D-33098 Paderborn, Germany}
    \affiliation{Department of Physics, Paderborn University, Warburger Stra\ss{}e 100, D-33098 Paderborn, Germany}

\author{Polina R. Sharapova}
    \affiliation{Department of Physics, Paderborn University, Warburger Stra\ss{}e 100, D-33098 Paderborn, Germany}

\author{Jan Sperling}
    \affiliation{Institute for Photonic Quantum Systems (PhoQS), Paderborn University, Warburger Stra\ss{}e 100, D-33098 Paderborn, Germany}
    \affiliation{Department of Physics, Paderborn University, Warburger Stra\ss{}e 100, D-33098 Paderborn, Germany}

\date{\today}

\begin{abstract}
    We introduce a new classification of multimode states with a fixed number of photons.
    This classification is based on the factorizability of homogeneous multivariate polynomials and is invariant under unitary transformations.
    The classes physically correspond to field excitations in terms of single and multiple photons, each of which being in an arbitrary irreducible superposition of quantized modes.
    We further show how the transitions between classes are rendered possible by photon addition, photon subtraction, and photon-projection nonlinearities.
    We explicitly put forward a design for a multilayer interferometer in which the states for different classes can be generated with state-of-the-art experimental techniques.
    Limitations of the proposed designs are analyzed using the introduced classification, providing a benchmark for the robustness of certain states and classes.
\end{abstract}

\maketitle


\section{Introduction}

    A central building block of most photonic quantum technologies is the multimode interferometer, which allows for linear and nonlinear coupling between different field modes.
    Linear interferometers, consisting of beam splitters and phase shifters, are described by unitary maps \cite{Reck_1994}.
    The indistinguishability of photons leads to quantum interference \cite{Hong_1987} between such field excitations and can produce highly nontrivial relations between the input and output states.
    For this reason, general linear photonic quantum simulators are computationally hard to simulate for classical computers \cite{Aaronson__2010,Olson_2018}, resulting in recent boson sampling experiments \cite{Zhong_2018,Wang_2019_prl_BS,Brod_2019,Taballione_2023} and advances in quantum walks \cite{Gr_fe_2016,DiFidio2024}, to name but a few.
    Also, a wide range of challenging graph problems can be solved in this manner, utilizing encodings via interferometers \cite{Mezher_2023}.

    Exceeding unitary transformations, nonlinearities are required, i.e., optical elements providing nonlinear coupling between the different modes of the electromagnetic field.
    The presence of nonlinearities additionally increases the complexity of quantum systems \cite{Steinbrecher_2019, Spagnolo_2023}, as well as allowing us to realize universal photonic quantum computing \cite{Knill_2001_nature, Kok_2007}. 
    Experimentally, significant nonlinear coupling can be achieved via the so-called measurement-based nonlinearities, which comprise photon detection and post-selection protocols \cite{Scheel_2003, Stanisic_2017, Gubarev_2020}.

    For many quantum technological applications, the generation of particular states is paramount.
    Multimode interferometers equipped with photon addition and subtraction offer powerful tools to satisfy such a demand \cite{Sperling_2014,Fan_2018,Biagi_2022}.
    Thus, the fundamental understanding of multimode interferometers and the states that can be reached by such setups is crucial.
    For this reason, the question of the availability of transformation between states becomes a primary concern, and theoretical methods are in demand.
    For linear multimode interferometers, the mutual state transformations have been analytically studied, e.g., in Refs. \cite{Migda__2014,Parellada_2023}, where invariants and necessary criteria have been proposed.
    However, the analysis becomes more challenging when dimensionality increases and the resources provided by nonlinearities are considered.

    The aforementioned challenge is usually overcome by numerical simulations, which have a high computational prize tag caused by the underlying permanent calculation \cite{scheel2004permanents,Aaronson__2010,Heurtel_2023}.
    Here, numerical simulations within the framework of tensor network states \cite{Or_s_2019} can be effective for certain types of states, e.g., matrix-product states \cite{ Vidal_2003, Dhand_2018, Lubasch_2018}, because of their internal structure and by harnessing tools from tensor algebra \cite{Or_s_2014}.
    For arbitrary states, however, this framework does not reduce the scaling problem.
    Yet, it provides a more intuitive understanding of the behavior of the states.

    In this work, we theoretically address the question of state transformations within rather generic multimode nonlinear interferometers, for instance, serving as a means to produce complexly correlated multiphoton states.
    This encompasses the introduction of a classification for the set of $N$-photon states in a $M$-mode system by mapping this problem to the factorization of multivariate polynomials.
    These classes are shown to be invariant under unitary transformation;
    directed transitions between classes are shown to be achieved by introducing additional nonlinearities.
    In particular, we demonstrate how the multilayer nonlinear interferometers can be designed for the generation of states in different classes, starting with vacuum.

    The structure of the paper is as follows:
    In Sec. \ref{sec:fock_general}, we describe the $N$-photon quantum state as multivariate polynomials and introduce the classification based on the factorization of polynomials.
    In Sec. \ref{sec:multivariate_description}, we study the properties of multilayer interferometers with nonlinearities, differentiating between photon additions, subtractions, and projections.
    In Sec. \ref{sec:examples}, we present interesting examples of states generated via multilayer interferometers.
    We summarize general properties of the framework and discuss possible extensions in Sec. \ref{sec:examples}.
    We conclude in Sec. \ref{sec:conclusions}.
    Appendices provide additional context.

\section{Classification of multimode, multiphoton states}
\label{sec:fock_general}

    In this section, we introduce a classification for $N$-photon states in $M$-mode systems.
    Mapping these states to multivariate homogeneous polynomials allows us to introduce the factorized form of the $N$-photon states in terms of creation operators.
    The uniqueness of the factorized form then provides a natural classification.

    \subsection{Quantum states with a total of $N$ excitations across $M$ bosonic modes}

        Let $\Phi_{N,M}$ denote the set of all $N$-photon states in an $M$-mode system.
        Considering the orthonormal photon-number states, the following set represents a basis: $B_{N,M}=\{|s_1, \ldots, s_M\rangle:s_1,\ldots,s_M\in\mathbb N,s_1+\cdots+s_M = N\}$.
        Then, each state in $\Phi_{M,N}$ has the following, generic form
        \begin{equation} 
            \label{eq:standard_state_NM}
            \ket{\Psi} = \sum_{|s_1,\ldots,s_M\rangle\in B_{N,M} }
            f_{s_1,\ldots,s_M}
            \ket{s_1, \ldots, s_M},
        \end{equation}
        obeying $\langle\Psi|\Psi\rangle=\sum_{s_1, \ldots, s_M}  |f_{s_1, \ldots, s_M}|^2 = 1$, where  $f_{s_1, \ldots, s_M}$ are probability amplitudes.
        The number of basis states, thus the dimension of the Hilbert space $\mathcal H_{N,M}=\mathrm{span}(B_{N,M})$, is given by $\dim\mathcal{H}_{N,M} = {\binom {M+N-1}{N}}$.

        In an $M$-port, linear interferometer, the input creation operators $\boldsymbol{a}^\dagger $ are mapped to the output operators $\boldsymbol{b}^\dagger $ by an $M \times M$ unitary matrix $U$,
        \begin{equation}
            \label{eq:lin_interfer}
            \boldsymbol{b}^\dagger = U \boldsymbol{a}^\dagger.
        \end{equation}
        The input and output states are connected via a unitary evolution operator $\hat{\mathcal{U}}$ that acts on the full Hilbert space $\mathcal H$, making the problem of finding the output state non-trivial.
        Indeed, for two basis states, $\ket{T}=\ket{t_1, \cdots, t_M}$ and $\ket{S} = \ket{s_1, \cdots, s_M}$, the transition matrix elements can be calculated using the permanent $\mathrm{perm}(U_{S,T}) $.
        That is, we have $\bra{T} \hat{\mathcal{U}} \ket{S} = \mathrm{perm}(U_{S,T}) [s_1!\ldots s_M!t_1!\ldots t_M!]^{-1/2}  $, where the $N\times N$ matrix $U_{S,T}$ is constructed from the unitary matrix  $U$ by repeating $s_i$ times its $i$\textsuperscript{th}  column and $t_j$ times its $j$\textsuperscript{th} row;
        for details, see, e.g., \cite{scheel2004permanents,Heurtel_2023}.
        In general, having an arbitrary initial state $\ket{\psi}_\mathrm{in}$, the calculation of the output state $\ket{\psi}_\mathrm{out}$ requires the computation of all matrix elements $\bra{T} \hat{\mathcal{U}} \ket{S} $, and thus the computation of  $(\dim\mathcal{H}_{M,N})^2$ permanents.

    \subsection{Polynomial representation and irreducible factorization thereof}
    \label{sec:factorizedform}

        States in Eq. \eqref{eq:standard_state_NM} can be recast in terms of the creation operators $\hat{a}_i^\dagger$,
        \begin{equation}
        \begin{aligned}
            \label{eq:unfactorized_state}
            \ket{\Psi} = \sum_{\substack{s_1,\ldots, s_M\in\mathbb N: \\ s_1+\cdots+s_M = N} }
            {}&
            \frac{f_{s_1, \ldots, s_M}}{\sqrt{s_1!\cdots s_M!}}
            \\
            {}&
            \times
            \hat a_1^{\dag\,s_1}\cdots\hat a_M^{\dag\,s_M}|\mathrm{vac}\rangle,
        \end{aligned}
        \end{equation}
        where $|\mathrm{vac}\rangle=|0,\ldots,0\rangle$.
        In this form, we can encode the quantum state $\ket{\Psi} $ via a degree-$N$, homogeneous, multivariate polynomial as $|\Psi\rangle=p_\Psi(\hat a_1^\dag,\ldots,\hat a_M^\dag)|\mathrm{vac}\rangle$,
        where $\boldsymbol x=(x_1,\ldots,x_M)$ for $M$ arguments,
        \begin{equation}
            p_\Psi(\boldsymbol x)
            =
            \sum_{\substack{s_1,\ldots, s_M\in\mathbb N: \\ s_1+\cdots+s_M = N} }
            \frac{f_{s_1, \ldots, s_M}}{\sqrt{s_1!\cdots s_M!}}
            x_1^{s_1}\cdots x_M^{s_M},
        \end{equation}
        and where all terms have the same, fixed degree $N=s_1+\cdots+s_M$.
        Moreover, since the creation operators for different modes commute, the elements of $\Phi_{N,M}$ can be identified with elements of the commutative ring of multivariate polynomials, $\mathbb{C}[x_1, ..., x_M]$.

        For each homogeneous polynomial $ p_{\Psi}(\boldsymbol{x})$ of degree $N$ exists a factorization of the following form \cite{Sharpe_1987}:
        \begin{equation}
          p_{\Psi}(\boldsymbol{x}) = \mathcal{N} \prod_{i} \prod_{k_i} p_{k_i}^{(i)}(\boldsymbol{x}),
        \end{equation}
        where the factors  $p_{k_i}^{(i)}(\boldsymbol{x})$ are homogeneous and irreducible polynomials  of $i$\textsubscript{th} order and $\mathcal{N}\in\mathbb C$ is a constant.
        In contrast to univariate polynomials over the complex field, $\mathbb C[x_1]$, which can be always decomposed in terms of linear factors $p_k^{(1)}$, in the multivariate case, $M>1$, the polynomial $p_\Psi(\boldsymbol{x})$ of degree $N \geq 2$ can be irreducible itself.
        Specifically, the factorized form of the aforementioned quantum state is
        \begin{equation}
            \label{eq:factorized_state}
        \begin{aligned}
             \ket{\Psi}
             = {}&
             \mathcal{N}
             \prod_{i=1}^{n_1} \left(\sum_{j_1=1}^M \gamma^i_{j_1} \hat{a}^\dagger_{j_1}\right)
             \prod_{i=1}^{n_2}\left(\sum_{j_1,j_2=1}^M\gamma^i_{j_1,j_2} \hat{a}^\dagger_{j_1}\hat{a}^\dagger_{j_2} \right)
             \\
             &\times
             \prod_{i=1}^{n_3}\left(\sum_{j_1,j_2,j_3=1}^M\gamma^i_{j_1,j_2,j_3} \hat{a}^\dagger_{j_1} \hat{a}^\dagger_{j_2} \hat{a}^\dagger_{j_3} \right)\dots\ket{\mathrm{vac}},
        \end{aligned}
        \end{equation}
        where each bracket corresponds to an irreducible homogeneous polynomial $p_k^{(i)}$, and where  $n_1$ is the number of linear $p_k^{(1)}$, $n_2$ is the number of quadratic, etc. polynomials, with $1n_1+2n_2+3n_3+\cdots=N$ to result in the sought-after total degree $N$.

        Expressing the state in Eq. \eqref{eq:unfactorized_state} in the factorized form of  Eq. \eqref{eq:factorized_state} requires the factorization of a high-dimensional multivariate polynomial, which is quite challenging in practice.
        Moreover, any small perturbation in the probability amplitudes might lead to an irreducible polynomial, rendering numerical computations challenging.
        Nevertheless, there are robust and stable numerical algorithms that allow us to factorize the polynomials with a given precision \cite{Kaltofen_2008,Wu_2015}.
        Specifically, for polynomials over algebraic number fields, the factorization algorithm is realized in the SageMath package \cite{SageMath}.

    \subsection{Classification and discussion}
    \label{sec:classification}

        It is well established that the symmetric tensor algebra $S(\mathbb{C}^M)$ and polynomials in $M$ variables $\mathbb{C}[x_1,\ldots,x_M]$ are isomorphic \cite[pp. 506]{Bourbaki1989}.
        As a consequence, the multiplication in the polynomial ring corresponds to the symmetrized tensor product in $S(\mathbb{C}^M)$ that yields the bosonic exchange symmetry of quantum states.
        Therefore, factorizations of polynomials correspond to symmetric tensor decompositions. 
        In other words, with respect to the symmetric tensor product, we can ask ourselves to which extend can we factorize the photons' states as joint excitation of multiple quantized modes, hence addressing the bosonic entanglement of symmetric particle states \cite{Jan_Agudelo_23}.

        As mentioned before, the non-negative integer $n_k$ in Eq. \eqref{eq:factorized_state} describes how many irreducible degree-$k$ polynomials there are to factorize the state,
        and the collection $(n_1,\ldots,n_N)$ is unique for a given $\ket{\Psi}$;
        see Lemma \ref{lemma_1} in Appendix \ref{appendix_cl_proof}.
        This allows us to introduce a classification scheme of the quantum states in the set $\Phi_{N,M}$ based on the polynomial factorization of $p_\Psi(\boldsymbol x)$, where
        the different classes encompass the states with different tuples $(n_1, n_2, \dots,n_N)$.

        The previously described kind of factorization allows us to allocate processes of collective excitations of different orders.
        Indeed, the integer number $n_k$ shows how many simultaneous excitations of $k$ photons across $M$ modes.
        For instance, $n_1=N$ means that each of the $N$ photons is excited strictly independently but may be generated in various superpositions of modes, $\gamma^i_{1}\hat a_1^\dag+\cdots+\gamma^i_{M}\hat a_M^\dag$. 
        When other values of $n_k\neq n_1$ are non-zero, multiple photon excitations are necessary to create the state.
        For example, if $n_2>0$, there are at least two photons that must have been generated simultaneously as a photon pair;
        likewise, they are the result of a joint irreducible two-photon excitation, $\gamma^i_{1,1}\hat a_1^{\dag2}+\cdots+\gamma^{i}_{j_1,j_2}\hat a_{j_1}^\dag\hat a_{j_2}^\dag+\cdots+\gamma^i_{M,M}\hat a_M^{\dag 2}$, displaying the correlations in this photon pair.

        To designate each class, we utilize the following notation: $[1^{n_1} ~ 2^{n_2} ~ ...~N^{n_N}]_M$, i.e.,
        \begin{equation}
            |\Psi\rangle\in[1^{n_1} ~ 2^{n_2} ~ ...~N^{n_N}]_M
            \Leftrightarrow
            |\Psi\rangle \text{ as in Eq. \eqref{eq:factorized_state}},
        \end{equation}
        listing factors as ``[$\ldots$ degree\textsuperscript{how many}$\ldots$]\textsubscript{number of modes}.''
        For example, the class $[1^{N}]_M$ corresponds to the states that can be linearly factorized, and the class $[N^{1}]_M$ determines states given by one $N$\textsuperscript{th}-degree irreducible polynomial.
        In between, as another example, the class $[1^{4} ~ 2^{0} ~ 3^{2}]_5 \equiv [1^{4} ~ 3^{2}]_5$  includes all the states in a five-port interferometer whose factorization in Eq. \eqref{eq:factorized_state} contains four first-order and two third-order polynomials;
        the number of photons in this case is $1\times4+3\times2=10$.

\begin{figure}
    \includegraphics[width=1.\linewidth]{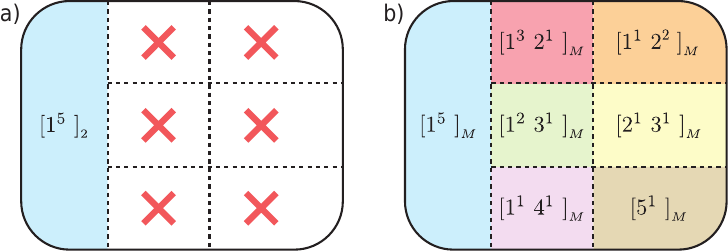}
    \caption{%
        Scheme of all possible classes of $5$-photon states for (a) $M=2$ and (b) $M\geq 3$ bosonic modes.
    }\label{fig:scheme_classes}
\end{figure}

        For two-port interferometers ($M=2$) and a given $N$, there exists only the one class $[1^N]_2$ as any two-variable homogeneous polynomial can be factorized into linear polynomials.
        Non-trivial factorizations emerge for higher-mode systems, $M\geq 3$, where the number of classes corresponds to the number of integer partitions;
        see Theorem \ref{theorem_1} in Appendix \ref{appendix_cl_proof}.
        In Fig. \ref{fig:scheme_classes}, all the possible classes of states with $N=5$ photons are represented for $M=2$ and $M\geq 3$ systems as examples.

        The important property of the factorization in Eq. \eqref{eq:factorized_state} is the fact that the tuple $(n_1, n_2, ... , n_N)$ remains unchanged when the variables are transformed through unitary transformations;
        see Theorem \ref{theorem_2} in Appendix \ref{appendix_cl_proof}.
        This means that a linear optical interferometer cannot change the class of the state, showing a robustness against the choice of mode basis \cite{Sperling_2019}.
        This also emphasizes the importance of nonlinear networks for generating and transitioning to more complex quantum correlations.
        Such nonlinear interferometers and their optical implementation are characterized in the remainder of this work.


\section{Multilayer interferometer}
\label{sec:multivariate_description}

    We now introduce nonlinearities that allow transitions between the classes.
    The first nonlinearity utilized is photon addition.
    Based on this nonlinearity, we present a multilayer interferometer with alternating unitary transformations and photon additions.
    We show that this scheme provides the generation of any state in the class $[1^N]_M$.
    In order to reach a different class, we then introduce photon-subtraction and projection nonlinearities to our multilayer interferometer.

\begin{figure}
    \includegraphics[width=1.\linewidth]{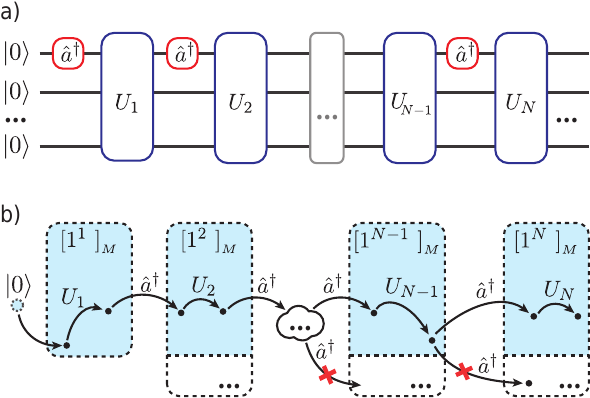}
    \caption{%
        (a) Scheme of the multilayer interferometer with photon additions $\hat{a}^\dagger$ in the first channel and linear transformations $U^{(1)},\ldots,U^{(N)}$.
        (b) Scheme of the transitions between states enabled by the multilayer system.
        Allowed transitions are indicated by arrows.
        Blue regions correspond to the classes $[1^n]_M$, while the white region indicates other classes. 
    }\label{fig:scheme_addition_multilayer}
\end{figure}

    \subsection{Photon-addition nonlinearity}
    \label{sec:nonlinear_additions}

        The first nonlinearity we study is a photon addition, written as $\hat{a}^\dagger$ for a generic mode.
        Clearly, this nonlinear operation increases the number of photons in the system, rendering it possible to utilize the vacuum $|\mathrm{vac}\rangle$ as the input state for all modes of the interferometer.
        Consecutive actions of $N$ photon-creation operators generate $M$-mode basis states,
        \begin{equation}
            \hat{a}_1^{\dag \, s_1}\cdots \hat{a}_M^{\dag \, s_M} \ket{\mathrm{vac}}
            \propto
            \ket{s_1,s_2, ...s_M}\in B_{N,M}.
        \end{equation}

        Going beyond relatively simple basis states, we consider a multilayer system where each photon addition is followed by a unitary transformation;
        see Fig. \ref{fig:scheme_addition_multilayer}(a).
        In the first layer, the state after photon addition reads $\hat{a}_1^\dagger \ket{\mathrm{vac}}$, and the unitary transformation $\hat{a}_i^\dagger \mapsto \sum_{j} U^{(1)}_{i,j} \hat{a}_j^\dagger$ yields a single photon $\ket{\psi}_{1} = \sum_{j} U^{(1)}_{1,j} \hat{a}_j^\dagger \ket{0}$ that is spread across modes.
        The addition of a second photon gives $\hat{a}_1^\dagger \ket{\psi}_{1} = \hat{a}_1^\dagger  \sum_{j} U^{(1)}_{1,j} \hat{a}_j^\dagger \ket{\mathrm{vac}} $, which is followed by a second unitary transformation, resulting in $\ket{\psi}_{2} \propto \left( \sum_{i} U^{(2)}_{1,i} \hat{a}_i^\dagger \right)\Big(   \sum_{j,k} U^{(1)}_{1,j} U^{(2)}_{j,k} \hat{a}_k^\dagger \Big)\ket{\mathrm{vac}} $.
        Iterating this procedure for $N$ layers, we eventually arrive at a state in the following form:
        \begin{equation}
            \label{eq:state_in_c}
            \ket{\Psi} = \mathcal{N} \prod_{i=1}^{N} \left(\sum_{j=1}^M \gamma^{i}_{j} \hat{a}^\dagger_{j}\right) \ket{\mathrm{vac}}
            =
            \mathcal N \hat c_1^\dag \cdots \hat c_N^\dag|\mathrm{vac}\rangle,
        \end{equation}
        which belongs to $[1^N]_M$ since it only includes first-order polynomials.
        Also, we introduced the operators $\hat{c}^\dagger_i = \sum_{j=1}^M \gamma^i_{j} \hat{a}^\dagger_j$ to describe the superposition of modes that is excited by each added photon.
        Each operator $\hat{c}^\dagger_i$ is determined by the row-vector $\Gamma^{[i]} = ( \gamma^i_{1}, \ldots , \gamma^i_{M}) $;
        therefore, the full state $\ket{\Psi}$ can be represented by a multiset (aka unordered tuple) of vectors as
        \begin{equation}
            \label{eq:multiset_vectors}
            \Gamma_\Psi = \left\{\Gamma^{[1]}, \ldots, \Gamma^{[N]}   \right\}.
        \end{equation}

        We note that the polynomial $p_\Psi(\boldsymbol x)=p_1(\boldsymbol x)\cdots p_N(\boldsymbol x)$ that describes the state $\ket{\Psi}$ is not unique in the following sense:
        The multiplication of each linear factor $p_i(\boldsymbol x)$ by a non-zero scalar $s_i$ does not change the physical state.
        In terms of the multiset $\Gamma_\Psi $, this means that the same state is described by $\Gamma^\prime_\Psi = \{s_1 \Gamma^{[1]}, \ldots, s_N\Gamma^{[N]}  \} $.
        If we additionally choose $\Gamma^{[n]}$ to be normalized, the given mutliset $\Gamma_\Psi$ uniquely determines the quantum state, providing the general parametrization of the class $[1^N]_M$.

        Since the state in Eq. \eqref{eq:state_in_c} is the same as a linearly factorized state in Eq. \eqref{eq:factorized_state}, states in classes other than $[1^N]_M$ are not reachable with the multilayer scheme in Fig. \ref{fig:scheme_addition_multilayer}.
        However, can this scheme with photon additions and unitary transformations reach any state in the class $[1^N]_M$?
        The answer is yes and shown in the following.

    \paragraph{Unitary transformations in $[1^N]_M$.}

        Firstly, we analyze a unitary transformation $U$.
        Applying $U$ to the state, the superposition of creation operators $\hat{c}^\dagger_i$ is mapped as
        \begin{equation}
            \label{eq:transform_unitary_operators}
            \hat{c}^\dagger_i
            \mapsto
            \sum_{j=1}^M \gamma^i_{j} \sum_{k=1}^M U_{j,k}  \hat{a}_k^\dagger = \sum_{k=1}^M  \tilde{\gamma}^i_{k} \hat{a}_k^\dagger,
        \end{equation}
        which results in the first-order polynomials, meaning that a unitary transformation may not alter the class.
        The vector $\Gamma^{[i]}$ maps as
        \begin{equation}
             \label{eq:transform_unitary_vectors}
             \Gamma^{[i]}
             \mapsto
             \Gamma^{[i]} U,
        \end{equation}
        with the updated multiset $\Gamma_\Phi = \{\Gamma^{[1]}U, \ldots, \Gamma^{[N]}U \}$ that determines the transformed state.
        Because the unitary $U$ is the same for each $i$, the angles between the vectors $\Gamma^{[n]}$ remain invariant along the linear optical circuit.
        For two-mode states ($M=2$), this result was shown in Ref. \cite{Migda__2014} and discussed as the invariance of Majorana stellar representation under unitary maps.
        Note that the states described by Eq. \eqref{eq:state_in_c} belong to the class $[1^N]_M$ and, therefore, can be interpreted as states defined through an $M$-dimensional Majorana stellar representation.

    \paragraph{Photon additions to $[1^N]_M$.}

        Now, we analyze photon addition in our multiset representation, beginning with a state $\ket{\phi}$ from the class $[1^N]_M$ that is represented by the multiset $\Gamma_\phi = \{ \Gamma^{[1]}, \ldots , \Gamma^{[N]} \}$.
        After a photon addition, the state reads $\ket{\psi} \propto \hat{a}_1^\dagger \ket{\phi}$, which is in $[1^{N+1}]_M$;
        see Fig. \ref{fig:scheme_addition_orbits}.
        Thus, the addition of one photon adds the vector $\Gamma^{[N+1]} = (1, 0, \ldots, 0) $ to the already existing multiset of vectors,
        \begin{equation}
            \label{eq:set_after_photon_addition}
            \Gamma_\psi = \{ \Gamma^{[1]}, \ldots , \Gamma^{[N]}, \Gamma^{[N+1]} \}
        \end{equation}
        If---before the photon addition---we rotate the state $\ket{\phi}$ to $\ket{\tilde{\phi}}$ via a chosen $U$, we therefore rotate the multiset $\Gamma_{\tilde\phi} = \{\tilde\Gamma^{[1]}, \ldots ,\tilde\Gamma^{[N]}\}$ with the rule $\tilde\Gamma^{[j]} = \Gamma^{[j]}U$ as discussed before. 
        The addition then generates a different state $\ket{\tilde{\psi}} \propto \hat{a}_1^\dagger \ket{\tilde{\phi}}$ with vectors $\Gamma_{\tilde\psi} = \{ \tilde\Gamma^{[1]}, ..., \tilde\Gamma^{[N]}, \Gamma^{[N+1]} \}$.
        Now, the angles between the vectors in the multisets $\Gamma_{\tilde\psi}$ and $\Gamma_{\psi}$ are altered according to our choice of $U$;
        thus,  in general, the states $|\psi\rangle$ and $|\tilde\psi\rangle$ cannot be transformed into each other via a unitary transformation;
        see Fig. \ref{fig:scheme_addition_orbits}.

\begin{figure}
    \includegraphics[width=1.0\linewidth]{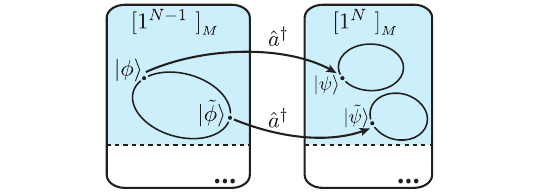}
    \caption{%
        Effect of unitary transformation and photon addition.
        The bold dots represent the states that are connected via unitary transformation.
        After photon addition, the states $\ket{\phi}$ and $\ket{\tilde{\phi}}= U\ket{\phi}$ are transformed into $\ket{\psi} \propto \hat{a}^\dagger\ket{\phi}$ and $\ket{\tilde\psi} \propto \hat{a}^\dagger\ket{\tilde{\phi}}$, respectively.
        The resulting states $\ket{\psi}$ and $\ket{\tilde\psi}$ are not necessarily connected via a unitary transformation anymore.
    }\label{fig:scheme_addition_orbits}
\end{figure}

    \paragraph{Summary.}

        In the multilayer interferometer with alternating unitary transformations and photon additions enables one to construct an arbitrary multiset of normalized vectors, with arbitrary relations of angles between them.
        Therefore, the scheme in Fig. \ref{fig:scheme_addition_multilayer}(a) is universal in the sense that we can reach the entire class $[1^N ]_M$.
        Some concrete examples are presented in Sec. \ref{sec:examples}.

    \subsection{Photon subtractions and photon-projection nonlinearities}

    As described in Sec. \ref{sec:nonlinear_additions}, photon additions alone are insufficient for exceeding the class $[1^N]_M$.
    However, starting from those states, we can go backwards---i.e., apply nonlinearities that reduce the number of photons---in order to approach different classes with complex correlations.
    To this end, we study single-photon subtraction and single-photon projection as experimentally accessible, measurement-induced nonlinearities in this section.
    Using the setup in Fig. \ref{fig:scheme_addition_multilayer} in combination with the nonlinearities discussed here then renders it possible to go from the base class $[1^N]_M$ to more sophisticated classes $[1^{n_1} ~ 2^{n_2} ~ ...~ N^{n_{N'}}]_M$.
    
    \paragraph{Single-photon subtraction.}

        The scheme of the subtraction process is shown in Fig. \ref{fig:scheme_subtraction_projection}(a).
        Applying an annihilation operator that implements the subtraction in the $k$-th mode transforms the state from $\ket{\psi}$ in Eq. \eqref{eq:state_in_c} to
        \begin{equation}
            \ket{\phi} \propto \left(\gamma^1_{k} \hat{c}^\dagger_2 \cdots \hat{c}^\dagger_N + \cdots + \gamma^N_{k}\hat{c}^\dagger_1 \cdots \hat{c}^\dagger_{N-1} \right) \ket{\mathrm{vac}}
        \end{equation}
        because of $\hat a_k\hat a_j^\dag=\delta_{j,k}+\hat a_j^\dag\hat a_k$, which yields the identity
        \begin{equation}
            \hat{a}_k \hat{c}^\dagger_i = \sum_{j=1}^M \gamma^i_{j}  \hat{a}_k \hat{a}^\dagger_j  =   \gamma^i_{k}  + \hat{c}^\dagger_i \hat{a}_k.
        \end{equation}
        More compactly, the state after subtracting one photon in the $k$\textsuperscript{th} mode takes the form
         \begin{equation}
            \label{eq:multivar_state_subtract}
            \ket{\phi} \propto
            \sum_{i=1}^N \gamma^i_{k}  \left( \prod^N_{\substack{j=1 \\ j\neq i}}  \hat{c}^\dagger_j \right)
            \ket{\mathrm{vac}}.
        \end{equation}
        This resulting state has one photon less and is written as a sum---not product---of linearly factorized polynomials and thus is not in the base class $[1^{N-1}]_{M}$, in general.

    \paragraph{Single-photon projective measurement.}
    
        As an alternative to photon subtraction, we can reduce the number of photons of the state by one with a measurement in the $k$\textsuperscript{th} mode.
        This is described through a projection operator $\hat{\Pi}_k = \hat{a}^\dagger_k \ket{\mathrm{vac}}_k\bra{\mathrm{vac}}_k\hat{a}_k$;
        see Fig.~\ref{fig:scheme_subtraction_projection}(b).
        The detection of one photon in the mode $k$ of the state $\ket{\psi}$, Eq. \eqref{eq:state_in_c}, generates the collapsed state
        \begin{equation}
            \label{eq:multivar_state_project}
             \ket{\phi} \propto
             \hat{\Pi}_k  \ket{\psi}  \propto
             \sum_{i=1}^N \gamma^i_{k}  \left(  \prod^N_{\substack{j=1 \\j\neq i}}  \hat{d}^\dagger_j \right)\ket{\mathrm{vac}},
        \end{equation}
        where $\hat{d}_j^\dagger = \hat{c}_j^\dagger - \gamma^j_{k} \hat{a}^\dagger_k $.
        Note that, in contrast to photon subtraction, the projection $\hat{\Pi}_k$ removes the $k$\textsuperscript{th} mode, and the collapsed state has a reduced modes number $M-1$.
        But, analogously to subtraction, the state after projection in Eq. \eqref{eq:multivar_state_project} is, in general, no longer in the class $[1^{N-1}]_{M-1}$ of a purely linear factorization.

    \paragraph{Unitary transformation.}

        The states in Eqs. \eqref{eq:multivar_state_subtract} and \eqref{eq:multivar_state_project} can be combined with another linear transformation $U$, cf. Sec. \ref{sec:nonlinear_additions}.
        Indeed, the operators $\hat{c}^\dagger_j$ and $\hat{d}^\dagger_j$ are straightforwardly transformed with the expression in Eq. \eqref{eq:transform_unitary_operators}.
        The structure of the resulting states remains the same---a sum of linearly factorized polynomials.
        Therefore, a unitary transformation cannot change any class;
        see Theorem \ref{theorem_2} in Appendix \ref{appendix_cl_proof}.

    \paragraph{Remarks.}

        Beyond the explicit procedures presented here, we can certainly also concatenate the described processes, leaving us not only with linear combinations of factorized polynomials but multilinear generalizations thereof, as exemplified later in this work.
        Furthermore, multi-photon subtractions and projective measurements on more than one photon would have a similar impact.
        For the sake of simplicity, however, we forego the explicit consideration of nonlinear multi-photon processes within this work.

\begin{figure}
    \includegraphics[width=1.\linewidth]{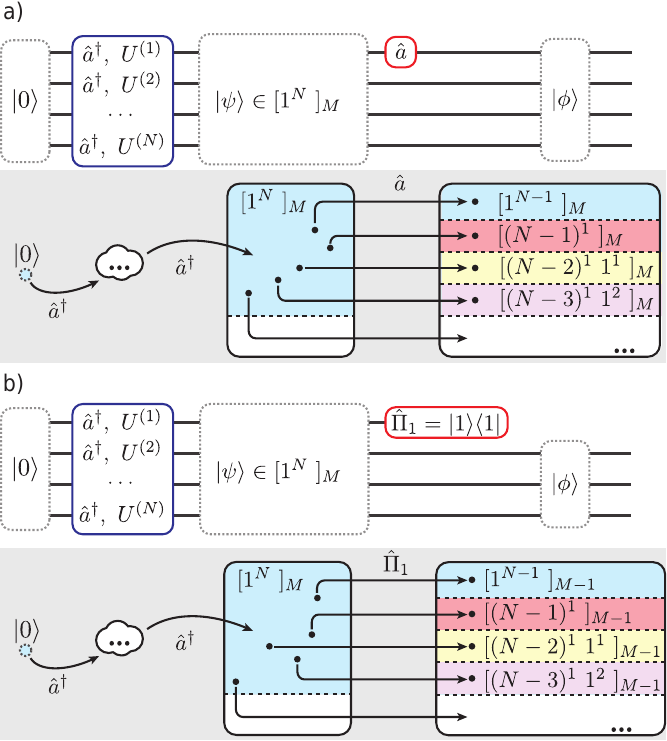} 
    \caption{%
        Scheme of the multilinear interferometer with $N$-additions and rotations together with (a) a single subtraction and (b) a single projection.
        Gray insets represent the transitions between classes under subtraction and projection.
        Different colors represent different classes, and allowed transitions are indicated by arrows.
    }\label{fig:scheme_subtraction_projection}
\end{figure}

    \subsection{Probability amplitudes}
    \label{sec:prob_amplt}

    \paragraph{States in the class $[1^N]_M$.}

        Equation \eqref{eq:factorized_state} provides a straightforward way to compute probability amplitudes.
        More directly, states in the class $[1^N]_M$ are described by linearly factorized polynomials.
        This renders it possible to determine probability amplitudes in terms of permanents as outlined by the following steps.

        In step 1, we stack the row-vectors $\Gamma^{[n]}$ and build the matrix $\Gamma^M$ of size $M\times N$.
        In step 2, for the basis state $\ket{S} = \ket{s_1,\ldots,s_M }$, we prepare the matrix $\tilde \Gamma_S$ from $\Gamma^M$ by taking  $s_1$ times the first column of $\Gamma^M$,  $s_2$ times the second column, etc.
        The resulting matrix $\tilde \Gamma_S$ of size  $N\times N$ now allows us to calculate the unnormalized probability amplitude for the chosen Fock state $\ket{S}$ as
        \begin{equation}
            \label{eq:permanent}
            \tilde{f}_{s_1,\ldots,s_M } = \frac{\mathrm{perm}(\tilde \Gamma_S)}{\sqrt{s_1!\cdots s_M!}}.
        \end{equation}
        The normalized probability amplitudes $f_{s_1,\ldots,s_M }$, obeying the normalization condition $\sum_{s_1,\dots,s_M }  |f_{s_1,\dots,s_M }|^2 = 1$, can be determined after computing $\tilde{f}_{s_1,\dots,s_M }$ for all Fock states.
        Consequently, for $[1^N]_M$ class elements, finding all probability amplitudes requires the computation of only $\dim\mathcal{H}_{N,M}$ permanents.
        Note that for creating $\tilde{\Gamma}_S$ with a structure of repeated columns, more efficient algorithms from Refs. \cite{Shchesnovich_2020,Heurtel_2023} may be used.

    \paragraph{States after subtraction and projection}

        The states after a single subtraction and after a projection are given by the sum of linearly factorized polynomials, cf. Eqs. \eqref{eq:multivar_state_subtract} and \eqref{eq:multivar_state_project}, respectively.
        For the $i$\textsuperscript{th} linearly factorized polynomial in that sum and the Fock state $\ket{S}$, the unnormalized probability amplitudes $ \tilde{f}^{i}_S$ can be computed using Eq. \eqref{eq:permanent}.
        The total unnormalized probability amplitudes then are
        \begin{equation}
            \tilde{f}_S = \sum_i \tilde{f}^{i}_S.
        \end{equation}
        An appropriate normalization can be carried out after computing of all the contributions $\tilde{f}_S$, like done before.


\section{States generation: examples}
\label{sec:examples}

    In this section, we apply the construction schemes devised in the previous section.
    This includes physically relevant base-class states in $[1^N]_M$ as well as highly structured and irreducible states in $[N^1]_M$.
    Furthermore, numerical optimizations to produce sought-after target states are carried out with multiple instances of subtractions and projections, utilizing and generalizing the single-instance description as derived above.

    \subsection{Base class $[1^N]_M$}

        Already the states in the base class $[1^N]_M$ appear in different, pioneering topics in quantum optics and quantum technologies, such as boson sampling \cite{Brod_2019} and quantum walks \cite{Gr_fe_2016}.
        Also, the class $[1^N]_M$ further includes interesting elements, such as states that remain mode-entangled under unitary transformations \cite{Sperling_2019}.
        As developed above, an arbitrary state in the base class can be generated via the multilayer interferometer in Fig. \ref{fig:scheme_addition_multilayer}(a).
        Therefore, the remaining task of generating a specific state is reduced to finding the unitary matrices $U^{(n)}$.
        This can be done, for example, via numerical optimization schemes.

    \paragraph{Symmetric example for a two-mode system.}

        As mentioned in Sec. \ref{sec:factorizedform}, all $N$-photon states in the two-mode system are members of the $[1^N]_{M=2}$ class.
        As one example, one can generate NOON states \cite{Kok_2002,Pryde_2003}, $ \ket{\mathrm{NOON}} = \left( \ket{N,0} + \ket{0,N} \right)/\sqrt2$, which can be defined in terms of the polynomials as $x_1^N + x_2^N = \prod_{n=1}^N (x_1 + \omega_N^n x_2)$, where $\omega_N=\exp(2\pi i/N)$.
        Thus, we have
        \begin{equation}
            \ket{\mathrm{NOON}} \propto \prod_{n=1}^N \left(\hat{a}_1^\dagger + \omega_N^n\hat{a}_2^\dagger\right) \ket{\mathrm{vac}},
        \end{equation}
        and the multiset $\Gamma_{\mathrm{NOON}}$ consists of $N$ vectors, $\Gamma^{[n]} = (1,  \omega_N^n)$ for $1\leq n\leq N$.
        The unitary matrices needed to generate NOON states with the multilayer interferometer in Fig. \ref{fig:scheme_addition_multilayer} (a) are
        \begin{equation}
            U^{(n)}
            = \begin{pmatrix}
                \cos(\pi/N)   & -i\sin(\pi/N)
                \\
                -i\sin(\pi/N)   &  \cos(\pi/N)
            \end{pmatrix},
        \end{equation}
        for $1\leq n <N$, and
        \begin{equation}
            U^{(N)} =
            \begin{pmatrix}
                \frac1{\sqrt{2}}    & -\frac1{\sqrt{2}}
                \\
                \frac1{\sqrt{2}}    & \frac1{\sqrt{2}}
            \end{pmatrix}.
        \end{equation}
        This result is consistent with the decomposition found in Ref. \cite{Kok_2011}.
        Further note that the choice of matrices $U^{(n)}$ is not unique.

    \paragraph{Asymmetric example for a three-mode system.}

        Another interesting state which belongs to the linearly factorizable class is the antisymmetric state \cite{Jex_2003}
        \begin{equation}
        \begin{aligned}
            \ket{\Psi} = {}&
            \frac{1}{\sqrt{6}} \left(\ket{0,1,2}+\ket{1,2,0}+\ket{2,0,1}\right.
            \\
            {}&
            \left.-\ket{0,2,1}-\ket{1,0,2}-\ket{2,1,0}\right).
        \end{aligned}
        \end{equation}
        The corresponding polynomial is $ p(x,y,z) =  y z^2 + x y^2 + x^2 z - y^2 z - x z^2 - x^2 y =  (x - y)(y - z)(z - x)  $ , which can be directly translated into an optical network with the general approach discussed above.
        In contrast to the antisymmetric state, the symmetric one, where $p(x,y,z) =  y z^2 + x y^2 + x^2 z + y^2 z + x z^2 + x^2 y$, cannot be factorized, which is shown in Sec. \ref{sec:other_classes}.

\begin{figure}
    \includegraphics[width=1.\linewidth]{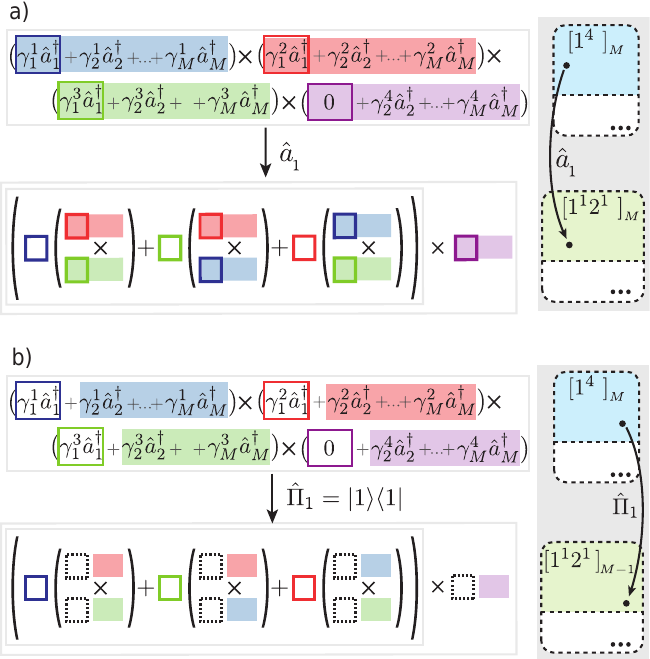} 
    \caption{%
        The structure of multivariate polynomials for the states obtained via (a) the single-photon subtraction and (b) the single-photon projection from a state belonging to the class $[1^4]_M$ with $\gamma^4_1=0$.
        Different terms are highlighted in different colors.
        The coefficients $\gamma^i_1$ are indicated by the colored squares.
        In (b), the dotted squares represent the absence of these coefficients.
        See the main text for details.
        Gray insets represent the corresponding transitions between classes under subtraction and projection, different colors represent different classes, and allowed transitions are indicated by arrows.
    }\label{fig:scheme_tensor_transformation}
\end{figure}

\begin{table*}
    \caption{
        First and second columns represent the target states and their classes $[N^1]_M$.
        The third column present the fidelities between the target and closest state that can be generated via single-photon subtraction from the class $[1^{N+1}]_M$ via the scheme depicted in Fig. \ref{fig:scheme_subtraction_projection}(a).
        The fourth column present the similar fidelities for single-photon projection from the class $[1^{N+1}]_{M+1}$ via the scheme depicted in Fig. \ref{fig:scheme_subtraction_projection}(b).
    }\label{table_results}
    \begin{ruledtabular}
    \begin{tabular}{ l c r r c }
        target state & class  & subtraction     &  projection &  \\
                    &        & fidelity         &  fidelity  &     \\
        \hline
        $\ket{ \Psi}_1 = \frac{1}{\sqrt{3}} \big(\ket{2,0,0}+\ket{0,2,0}+\ket{0,0,2} \big)$ & $[2^1]_3$ & 100.0\% & 100.0\% &   \\
        $\ket{ \Psi}_2 = \frac{1}{\sqrt{3}} \big(\ket{3,0,0}+\ket{0,3,0}+\ket{0,0,3} \big)$& $[3^1]_3$ & 83.3\% & 100.0\%  &   \\
        $\ket{ \Psi}_3 = \frac{1}{\sqrt{3}} \big(\ket{4,0,0}+\ket{0,4,0}+\ket{0,0,4} \big)$ &  $[4^1]_3$ & 66.6\% & 95.5\% &  \\
        $\ket{ \Psi}_4 = \frac{1}{\sqrt{4}} \big(\ket{2,0,0,0}+\ket{0,2,0,0}+\ket{0,0,2,0}+\ket{0,0,0,2} \big)$ &  $[2^1]_4$ & 75.0\% & 75.0\% &   \\
        $  \ket{ \Psi}_5 = \frac{1}{\sqrt{6}} \big(\ket{0,1,2}+\ket{1,2,0}+\ket{2,0,1} +\ket{0,2,1}+\ket{1,0,2}+\ket{2,1,0}\big) $ & $[3^1]_3$ &  96.4\% & 100.0\% &  \\
        $\ket{ \Psi}_6 = \frac{1}{\sqrt{3}} \big(\ket{1,1,0}+\ket{1,0,1}+\ket{0,1,1} \big)$& $[2^1]_3$ &  99.89\% &  100.0\% &  \\
        $\ket{ \Psi}_7 = \frac{1}{\sqrt{3}} \big(\ket{2,2,0}+\ket{2,0,2}+\ket{0,2,2} \big)$ & $[4^1]_3$ &  98.1\% & 100.0\%  &   \\
    \end{tabular}
    \end{ruledtabular}
\end{table*}

    \subsection{Irreducible class $[N^1]_M$}
    \label{sec:other_classes}
 
        Unlike the states in $[1^N]_M$, the generation of states from other classes needs additional nonlinearities, such as photon subtraction, measurement projections, and potentially other nonlinear processes corresponding to multiple photon generation, e.g., parametric down-conversion, four-wave mixing, etc.

        For instance, one experimentally easily accessible example of the state that is described by irreducible second-order polynomials of the class $[2^1]_M$ is the biphoton state, which may be generated via multimode parametric down-converions.
        Such a state can be represented in the form $\ket{\mathrm{\psi}} \propto \sum_i \gamma_{i,i}^2 \hat{A}^\dagger_i \hat{A}^\dagger_i\ket{\mathrm{vac}}$, where the operators $\hat{A}^\dagger_i$ represent so-called Schmidt modes \cite{PRA91}.

        In the following, let us discuss the key difference between the subtraction and projection, i.e., between the schemes in Figs. \ref{fig:scheme_subtraction_projection}(a) and \ref{fig:scheme_subtraction_projection}(b), respectively.
        The ability to generate states is related to their internal structure after subtraction and projection.
        Schematically, such a difference is depicted in Fig. \ref{fig:scheme_tensor_transformation}.
        Having the photon subtraction in the first channel, the coefficient with the subscript ``1,'' namely  $\gamma^k_{1}$, appears twice in Eq. \eqref{eq:multivar_state_subtract}:
        once as a scalar factor before the products of operators
        $\prod_{j\neq k}  \hat{c}^\dagger_j$,
        and once hidden as the coefficient $\gamma^{j \neq k}_{1}$ inside the operator $\hat{c}^\dagger_j$ itself;
        see color-highlighted squares in Fig. \ref{fig:scheme_tensor_transformation} (a).
        In contrast, in the measurement-projection scheme, the coefficient $\gamma^k_{1}$ is present only once: as a scalar factor before the term $\prod_{j\neq k}  \hat{d}^\dagger_j$.
        The other terms do not contain the coefficient $\gamma^k_{1}$ since the operators $\hat{d}^\dagger_k$ do not depend on it thanks to projection;
        see Eq. \eqref{eq:multivar_state_project} and dashed empty squares in Fig. \ref{fig:scheme_tensor_transformation}(b).

        As a result, the photon subtraction nonlinearity introduces additional constraints to the state transformation while the projection is more flexible, which is an interesting insight for future experimental implementations.
        However, this is on the expense of a reduction of available modes, which restricts scalability.

        Now, we present some states that can be generated with the schemes presented in Fig. \ref{fig:scheme_subtraction_projection}.
        In particular, for $[N^1]_M$, we give examples of states that either can or cannot be generated via photon subtraction and photon projection from the classes $[1^{N+1}]_M$ and $[1^{N+1}]_{M+1}$, respectively.

        In Table \ref{table_results}, the considered examples are presented.
        Specifically, we examine different variations of multi-mode NOON states \cite{Hong_2021} and other symmetric states.
        The third and fourth columns show the highest-found fidelity for a single subtraction and projection, respectively.
        The numerical procedure utilized to find the states is presented in Appendix \ref{appendix_numerical_procedure}.
        As we can see in Table \ref{table_results}, the three-mode NOON state $\ket{\Psi_1}$ can be generated via both the single subtraction and the projection processes with unit fidelity.
        In contrast, the state $\ket{\Psi_2}$ is only generated reliably via a single projection scheme while the states $\ket{\Psi_3}$ and $\ket{\Psi_4}$ are not reachable with a unit fidelity.
        From Table \ref{table_results}, one can see that the projection allows us to generate the states $\ket{\Psi}_2$, $\ket{\Psi}_5$, $\ket{\Psi}_6$, and $\ket{\Psi}_7$, which is impossible with a single-photon subtraction.
        This is further related to the fact that, in the case of projection, we are starting from a wider class in terms of the number of modes.
        All examples in Table \ref{table_results}, however, can be approximated with fidelities exceeding 50\%.

\begin{figure}
    \includegraphics[width=1.\linewidth]{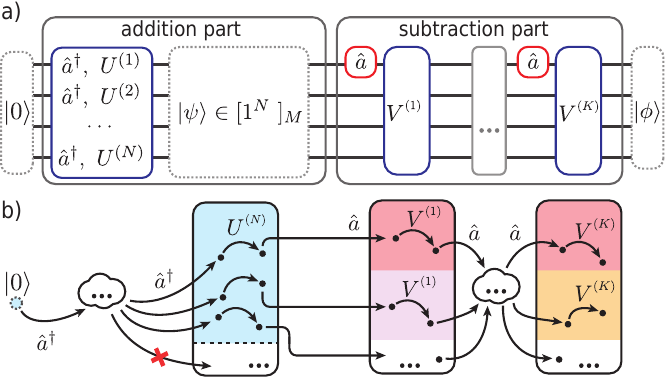} 
    \caption{%
        (a) The scheme of the multilinear interferometer with $N$ additions and rotations $U^{(n)}$ and consequent $K$-photon subtractions with rotations $V^{(m)}$.
        (b) Scheme of the transitions between states enabled by the multilayer system.
        The allowed transitions are indicated by arrows.
        The blue region corresponds to the classes $[1^N]_M$, other classes are shown in other colors.
        White regions correspond to arbitrary remaining classes.
    }\label{fig:scheme_full_additions_subtraction}
\end{figure}

    \subsection{Multiple subtractions and projections}

        We can scale the proposed schemes to encompass multiple photon subtractions, multiple projections, and the projection onto states with a higher number of photons, e.g., in a looped configuration \cite{Engelkemeier_2020}.
        The multilayer interferometer with multiple subtractions is shown in Fig. \ref{fig:scheme_full_additions_subtraction}.
        The state after multiple subtractions is obtained in a similar manner as described above by sequential actions of annihilation operators followed by unitary transformations.
        These unitary transformations increase the degrees of freedom and, together with multiple subtractions, allow us to generate states that are not reachable otherwise.

        As an example, we numerically probe whether the extended scheme up to two subtractions and rotations yield a better---possibly perfect---generation of some states discussed earlier.
        For instance, we find that the states $\ket{\Psi}_5$ and $\ket{\Psi}_6$, which were not reached via a single subtraction, can be generated by performing a unitary transformation after the single-photon subtraction.
        Moreover, the states $\ket{\Psi}_2$ and $\ket{\Psi}_4$ can be generated after two subtractions and two unitary transformations, i.e., after two layers in Fig. \ref{fig:scheme_full_additions_subtraction}.

        The proposed method of classification is based on multivariate polynomials and has a low computational cost after every intermediate layer.
        This is the basis of our numerical optimization and makes the polynomial for the final state compartively easy to compute. 
        When explicitly obtaining the final state, the main computational cost arises from finding its probability amplitudes, which, however, can be done using row vectors of polynomial coefficients of the final state.
        Note that similar  expressions can be obtained for multiple projections.
        In contrast, the approach \cite{Heurtel_2023}, for example, would require the reconstruction of the probability amplitudes after each unitary transformation in the multilayer interferometer.
        The main limitation here is the reduction in the number of modes after each projection.
        However, as mentioned before, one viable extension could be the use of the projections to $n$-photon states, with $n>1$.


\section{Conclusions}
\label{sec:conclusions}

    In this work, we established a classification scheme for multiphoton states in multimode bosonic systems.
    Equating the problem of factoring single-photon states, photon-pair states, triplet states, etc. of a multi-photon state to the problem of a factorization of multivariate polynomials led to distinct classes of states concurring with the degree of polynomial factorization.
    The base class is described via photons which may be spread across all modes but are generated individually, presenting a surprising form of global factorization property of individual photons.
    Other classes require the joint multi-photon excitation;
    for example, irreducible photon pairs corresponding to factors of quadratic polynomials that cannot be factorized linearly.
    The most advanced class contains one entirely irreducible polynomial which requires complex, highly multi-photon processes to be generated, resulting in sophisticated quantum correlations.
    Importantly, the introduced classes are shown to be invariant under arbitrary unitary mode transformations.

    Beyond the classification itself, we explored ways to generate states in each class via readily available experimental techniques.
    Our studies include arbitrary linear networks in combination with nonlinear components.
    For example, we demonstrated that all states in the base class of linearly factorizable states can be reached via photon addition and unitary transformations in a multi-layered setup.
    From this base class, we showed how photon-subtraction schemes and photon-measurement-induced nonlinearities render it possible to transition to other classes of complex, irreducible states.

    The introduced classification and relatively cheap computational costs of the methods formulated in this work is going to be useful to design optimal experiments with quantum light.
    In addition, the fabrication of quantum devices can benefit from the understanding of the state transformations available under nonlinear operations, together with the proper architecture of linear optical interferometers \cite{Reck_1994,Clements_2016,Guise_2018,Kumar_2021} and  optimization methods \cite{Krenn_2021}.

\begin{acknowledgments}
    This work is supported by the ``Photonic Quantum Computing'' (PhoQC) project, funded by the Ministry for Culture and Science of the State of North-Rhine Westphalia.
    We acknowledge financial support of the Deutsche Forschungsgemeinschaft (DFG) via the TRR 142/3 (Project No. 231447078, Subproject No. C10).
\end{acknowledgments}


\appendix 

\section{ Classification proofs }
\label{appendix_cl_proof}

\begin{lemma}
    The numbers $(n_1, n_2, ... , n_N)$ of degree 1, degree 2, $\ldots$, degree $N$ factors in Eq. \eqref{eq:factorized_state} are well-defined for any state $\Psi$.
\label{lemma_1}
\end{lemma}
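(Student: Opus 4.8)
The plan is to reduce the statement to the unique factorization property of the polynomial ring $\mathbb{C}[x_1,\ldots,x_M]$, combined with the observation that homogeneity is inherited by factors. First I would recall that $\mathbb{C}[x_1,\ldots,x_M]$ is a unique factorization domain, so the polynomial $p_\Psi(\boldsymbol x)$ — which, up to an irrelevant global phase, is uniquely determined by $|\Psi\rangle$ through the bijection in Eq.~\eqref{eq:unfactorized_state} — admits a factorization into irreducibles that is unique up to reordering and multiplication of the factors by units, i.e.\ by nonzero constants (the units of $\mathbb{C}[x_1,\ldots,x_M]$ are exactly $\mathbb{C}\setminus\{0\}$).

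Next I would show that every irreducible factor appearing in such a factorization is itself homogeneous. Suppose $p_\Psi = q\,r$ with $q = q_a + \cdots + q_b$ and $r = r_c + \cdots + r_d$ decomposed into homogeneous components, where $q_a, q_b, r_c, r_d$ are nonzero and $a \le b$, $c \le d$ are the extreme degrees occurring. Since $\mathbb{C}[x_1,\ldots,x_M]$ is an integral domain, the lowest- and highest-degree homogeneous components of the product $q r$ are $q_a r_c$ (degree $a+c$) and $q_b r_d$ (degree $b+d$), both nonzero. As $p_\Psi$ is homogeneous of degree $N$, we must have $a + c = N = b + d$, forcing $a = b$ and $c = d$; hence $q$ and $r$ are homogeneous. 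Iterating this, all irreducible factors of $p_\Psi$ are homogeneous, and collecting the scalar units into the prefactor $\mathcal{N}$ recovers exactly the form of Eq.~\eqref{eq:factorized_state}, in which $n_k$ denotes the number of degree-$k$ irreducible factors (the identity $1 n_1 + 2 n_2 + \cdots + N n_N = N$ then being automatic from additivity of degrees).

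Finally, the conclusion follows by identifying precisely which data in Eq.~\eqref{eq:factorized_state} can differ between two valid factorizations: only the ordering of the factors and their individual scalar normalizations, the latter absorbed into $\mathcal{N}$. Neither operation alters the \emph{multiset} of degrees $\{1^{n_1},\,2^{n_2},\,\ldots,\,N^{n_N}\}$ of the irreducible factors, so the tuple $(n_1,\ldots,n_N)$ is an invariant of $p_\Psi$, hence of $|\Psi\rangle$. I expect the only real subtlety — hardly an obstacle — to be the homogeneity-propagation step together with the bookkeeping that ``uniqueness up to units and reordering'' is exactly what makes the degree count unambiguous; the deep input, unique factorization in $\mathbb{C}[x_1,\ldots,x_M]$, is classical.
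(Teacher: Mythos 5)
Your proposal is correct and follows essentially the same route as the paper's proof: invoke unique factorization in $\mathbb{C}[x_1,\ldots,x_M]$, note that the irreducible factors of a homogeneous polynomial are themselves homogeneous, and observe that reordering and unit multiplication cannot change the multiset of factor degrees. The only difference is that you spell out the homogeneity-of-factors step (via the lowest- and highest-degree components of a product), which the paper merely asserts.
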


\begin{proof}
    The complex polynomials $\mathbb{C}[x_1,\dots, x_N]$ constitute a unique factorization domain \cite[Thm. 2.8.12]{Sharpe_1987}.
    Thus, the representation Eq. \eqref{eq:factorized_state} is unique up to reordering the irreducible polynomials and multiplication of scalar, non-zero factors.
    Existence of the representation Eq. \eqref{eq:factorized_state} for any gives state follows because factors of homogeneous polynomials are homogeneous.
    The numbers $n_i$ correspond to the number of homogeneous irreducible polynomials of degree $i$ in Eq. \eqref{eq:factorized_state}.
    By the uniqueness properties of factorizations, the tuple $(n_1, n_2, \ldots , n_N)$ exists and is unique for any given state $\ket{\Psi}$.
\end{proof}

\begin{theorem}
    For many-mode interferometers with $M \geq 3$, the number of non-empty classes corresponds to the number of integer partitions $a(N)$ \cite[A000041]{oeis}.
\label{theorem_1}
\end{theorem}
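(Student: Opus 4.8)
The plan is to exhibit a bijection between the non-empty classes and the integer partitions of $N$. A class is by definition a tuple $(n_1,\dots,n_N)$ of non-negative integers with $1n_1+2n_2+\cdots+Nn_N=N$, and such tuples are precisely the partitions of $N$, the integer $k$ appearing with multiplicity $n_k$. By Lemma~\ref{lemma_1} the assignment $\ket{\Psi}\mapsto(n_1,\dots,n_N)$ is well defined, so distinct classes correspond to distinct partitions and the number of non-empty classes is at most $a(N)$. It therefore remains to show that, for $M\ge 3$, every such tuple is attained by some state, i.e.\ every class is non-empty.

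For this I would first record the existence of irreducible homogeneous polynomials of arbitrary degree in three variables. Concretely, set $q_1(x_1,x_2,x_3)=x_1$ and, for $k\ge 2$, $q_k(x_1,x_2,x_3)=x_1^{k}+x_2^{k-1}x_3$, each homogeneous of degree $k$. For $k\ge 2$, viewing $q_k$ as a polynomial of degree one in $x_3$ with coefficients in $\mathbb{C}[x_1,x_2]$, it has coefficients $x_2^{k-1}$ and $x_1^{k}$ with $\gcd(x_2^{k-1},x_1^{k})=1$; a degree-one polynomial in $x_3$ whose two coefficients are coprime is irreducible in $\mathbb{C}[x_1,x_2,x_3]$ (a degree-zero factor would divide both coefficients, hence be a unit), so $q_k$ is irreducible. (Alternatively one may invoke that the Fermat form $x_1^{k}+x_2^{k}+x_3^{k}$ cuts out a smooth, hence irreducible, hypersurface in $\mathbb{P}^2$.) Moreover, since $\mathbb{C}[x_1,x_2,x_3]\subset\mathbb{C}[x_1,\dots,x_M]$ is a polynomial extension, comparing degrees in $x_4,\dots,x_M$ shows that any factor of $q_k$ in $\mathbb{C}[x_1,\dots,x_M]$ already lies in $\mathbb{C}[x_1,x_2,x_3]$; thus $q_k$ remains irreducible in $\mathbb{C}[x_1,\dots,x_M]$.

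Given a target tuple $(n_1,\dots,n_N)$, I would then form $p(\boldsymbol x)=\prod_{k=1}^{N} q_k(x_1,x_2,x_3)^{n_k}$. This is a nonzero polynomial, homogeneous of degree $\sum_k k n_k=N$, so the normalized state $\ket{\Psi}$ with $\ket{\Psi}\propto p(\hat a_1^\dagger,\dots,\hat a_M^\dagger)\ket{\mathrm{vac}}$ is a legitimate $N$-photon state in the $M$-mode system (it is nonzero because distinct monomials map to orthogonal Fock states). Because $\mathbb{C}[x_1,\dots,x_M]$ is a unique factorization domain and each $q_k$ is irreducible there, the irreducible factorization of $p$ contains exactly $n_k$ factors of degree $k$; hence $\ket{\Psi}\in[1^{n_1}\,2^{n_2}\cdots N^{n_N}]_M$, so this class is non-empty. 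Combined with the injectivity supplied by Lemma~\ref{lemma_1}, this yields the desired bijection, and the number of non-empty classes equals $a(N)$.

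The only delicate point is the irreducibility of the building blocks $q_k$ and its persistence under adjoining the remaining $M-3$ variables; the rest is bookkeeping. It is also worth remarking why $M\ge 3$ is essential: for $M=2$ every homogeneous polynomial splits into linear forms (dehomogenize, factor over $\mathbb{C}$, rehomogenize), so only the partition $(N,0,\dots,0)$ occurs and there is a single class $[1^N]_2$.
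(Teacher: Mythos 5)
Your proof is correct and follows essentially the same route as the paper: realize every integer partition of $N$ by multiplying together one irreducible homogeneous polynomial of each required degree in three of the variables, with the upper bound $a(N)$ and the well-definedness of the class map supplied by unique factorization (Lemma~\ref{lemma_1}). The only difference is the choice of witness---the paper uses the Fermat form $x^k+y^k-z^k$ and asserts its irreducibility, whereas you use $x_1^k+x_2^{k-1}x_3$ and actually verify irreducibility and its persistence under adjoining the remaining $M-3$ variables, making your write-up slightly more self-contained.
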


\begin{proof}
    The homogeneous polynomial $p_k(x,y,z) = x^k+y^k-z^k$ is irreducible for any $k \in \mathbb N$ (Fermat curve).
    For any tuple $S=(n_1,\ldots,n_N)$, with $\sum_i i \times  n_i =N$, the polynomial $P =p_1^{n_1} \dots  p_N^{n_N}$ belongs to the class $[1^{n_1} \dots N^{n_N}]$.
    Thus, to any integer partition $\sum_i i \times  n_i =N$, the class $[1^{n_1} \dots N^{n_N}]$ is non-empty.
    Therefore, the number of (non-empty) equivalence classes coincides with the number of integer partitions when $M\geq 3$.
\end{proof}

\begin{theorem}
    The class $[1^{n_1} \dots N^{n_N}]$ of a state $\ket{\Psi}$ is invariant under transformations of a state $\ket{\Psi}$ by a linear interferometer.
\label{theorem_2}
\end{theorem}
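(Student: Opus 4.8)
The plan is to show that a linear interferometer acts on the polynomial $p_\Psi(\boldsymbol x)$ by an invertible linear substitution of the variables, and that such substitutions preserve the degree-type of the irreducible factorization. First I would make precise the action on polynomials: by Eq.~\eqref{eq:lin_interfer} the interferometer sends $\hat a_i^\dag \mapsto \sum_k U_{i,k}\hat a_k^\dag$, and since the polynomial representation $|\Psi\rangle = p_\Psi(\hat a_1^\dag,\ldots,\hat a_M^\dag)|\mathrm{vac}\rangle$ is built from these commuting operators, the transformed state is represented by $p_{\Psi'}(\boldsymbol x) = p_\Psi(U^{T}\boldsymbol x)$, i.e.\ by the polynomial obtained from $p_\Psi$ under the linear change of variables $x_i \mapsto \sum_k U_{i,k} x_k$. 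The crucial point is that $U$ is invertible (being unitary), so this substitution is an algebra automorphism $\varphi_U$ of $\mathbb C[x_1,\ldots,x_M]$.

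Next I would invoke the standard ring-theoretic fact that a ring automorphism carries irreducible elements to irreducible elements and preserves factorizations: if $p_\Psi = \mathcal N\prod_i\prod_{k_i} p_{k_i}^{(i)}$ is the (unique, by Lemma~\ref{lemma_1}) irreducible factorization, then $\varphi_U(p_\Psi) = \mathcal N\prod_i\prod_{k_i}\varphi_U(p_{k_i}^{(i)})$ is a factorization of $p_{\Psi'}$ into irreducibles. Moreover $\varphi_U$ is degree-preserving — a linear substitution maps a homogeneous polynomial of degree $d$ to a homogeneous polynomial of degree $d$ — so each irreducible factor of degree $k$ is sent to an irreducible factor of degree $k$. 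Hence the multiset of factor degrees, and in particular the tuple $(n_1,\ldots,n_N)$ counting how many irreducible factors of each degree appear, is unchanged. By the uniqueness in Lemma~\ref{lemma_1} this tuple is exactly the label of the class, so $\ket{\Psi}$ and its image lie in the same class $[1^{n_1}\cdots N^{n_N}]_M$.

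Finally I would note that the argument covers the full evolution operator $\hat{\mathcal U}$ on $\mathcal H$, not merely its restriction to the creation operators: the identity $|\Psi'\rangle = p_\Psi(U^{T}\boldsymbol{\hat a}^\dag)|\mathrm{vac}\rangle$ already encodes the nontrivial permanent-valued matrix elements discussed after Eq.~\eqref{eq:lin_interfer}, so no separate combinatorial computation is needed — the polynomial picture linearizes what is otherwise a permanent identity. The one point requiring a little care, which I would flag as the main obstacle, is the normalization ambiguity: the polynomial representing a state is only defined up to a nonzero scalar (and individual irreducible factors up to nonzero scalars), so one must phrase ``preserves the factorization type'' at the level of the projective object or, equivalently, define $(n_1,\ldots,n_N)$ intrinsically as in Lemma~\ref{lemma_1}; once that is done, the invariance under $\varphi_U$ is immediate because scalars are units and automorphisms fix the group of units up to the scalar already absorbed in $\mathcal N$.

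\begin{proof}
    By Eq.~\eqref{eq:lin_interfer} the linear interferometer maps $\hat a_i^\dag\mapsto\sum_{k}U_{i,k}\hat a_k^\dag$. Since the creation operators commute and $|\Psi\rangle=p_\Psi(\hat a_1^\dag,\ldots,\hat a_M^\dag)|\mathrm{vac}\rangle$, the transformed state is $|\Psi'\rangle=p_{\Psi'}(\hat a_1^\dag,\ldots,\hat a_M^\dag)|\mathrm{vac}\rangle$ with $p_{\Psi'}(\boldsymbol x)=p_\Psi(U^{T}\boldsymbol x)$, i.e.\ $p_{\Psi'}=\varphi_U(p_\Psi)$ where $\varphi_U$ is the linear substitution $x_i\mapsto\sum_k U_{i,k}x_k$ of $\mathbb C[x_1,\ldots,x_M]$. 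As $U$ is unitary, hence invertible, $\varphi_U$ is a $\mathbb C$-algebra automorphism, with inverse $\varphi_{U^{-1}}$. Automorphisms send units to units and irreducibles to irreducibles, and a linear substitution maps a homogeneous polynomial of degree $k$ to a homogeneous polynomial of degree $k$. Applying $\varphi_U$ to the irreducible factorization of $p_\Psi$ from Eq.~\eqref{eq:factorized_state} therefore yields an irreducible factorization of $p_{\Psi'}$ with exactly $n_k$ irreducible factors of degree $k$ for every $k$. By Lemma~\ref{lemma_1} the tuple $(n_1,\ldots,n_N)$ is uniquely determined by the state, so it coincides for $\ket{\Psi}$ and $\ket{\Psi'}$; equivalently, $\ket{\Psi}$ and $\ket{\Psi'}$ belong to the same class $[1^{n_1}\cdots N^{n_N}]_M$.
\end{proof}
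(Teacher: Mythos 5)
Your proof is correct and follows essentially the same route as the paper: both identify the interferometer's action as an invertible linear change of variables on $p_\Psi$, note that degree and irreducibility of homogeneous factors are preserved, and conclude via the uniqueness of the factorization (Lemma~\ref{lemma_1}). The only cosmetic difference is that you package the key step as the standard fact that a $\mathbb C$-algebra automorphism sends irreducibles to irreducibles, whereas the paper unfolds that same fact as an explicit contradiction argument using the inverse substitution.
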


\begin{proof}
    If $\ket{\Psi}$ is transformed to $\ket{\Psi'}$ by a linear interferometer, described by a unitary matrix $U$ of size $M \times M$, then the associated homogeneous polynomials $p_\Psi$ and $p_{\Psi'}$ relate as $p_{\Psi'}(\boldsymbol{x}) = p_\Psi (U \boldsymbol{x})$.
    The homogeneous polynomials $p_{\Psi'}$ and $p_{\Psi}$ have the same class $[1^{n_1} \dots N^{n_N}]$:
    Consider the polynomial $p_\Psi$ of class $[1^{n_1} \dots N^{n_N}]$ in its irreducible factorized form Eq. \eqref{eq:factorized_state}.
    Any factor constitutes a homogeneous polynomial.
    The degree of homogeneous polynomials is invariant under invertible, linear transformations.
    Thus, a factorization of $p_{\Psi'}$ with tuple $(n_1, n_2, \ldots , n_M)$ can be obtained from the irreducible factorization Eq. \eqref{eq:factorized_state} of $p_\Psi$.
    Assume one of the homogeneous factors of $p_{\Psi'}$ is not irreducible and is factorized further.
    Then, using the inverse of the considered linear transformations of variables, we achieve a factorization of $p_\Psi$ in the original variables with some tuple $(n'_1, n'_2, \ldots , n'_M)$.
    Let $k$ be the first index for which $n'_k \neq  n_k$.
    Then $n'_k > n_k$ by construction.
    Subsequent factorization applied to each factor (if necessary) yields a tuple $(n''_1, n''_2, \ldots , n''_M)$ for $p_\Psi$ different from $(n_1, n_2, \ldots , n_M)$ but this contradicts the uniqueness properties of the irreducible factorization of $p_\Psi$.
\end{proof}

\section{Non-universality of the schemes involving an interferometer with a single annihilation}
\label{appendix_SingleAnnihilationsNoGo}
\label{appendix_subt_nonunversal}

In this section, we will show that when the number of photons $N$ and the number of modes $M$ are large, a single annihilation operation cannot produce all quantum states of $N-1$ photons with $M$ modes from states of type $[1^N]_M$.
    In the following, $\mathbb{C}[x_1,\dots,x_M]_{k}$ denotes homogeneous polynomials in $\mathbb{C}[x_1,\dots,x_M]$ of degree $k$.

\begin{theorem}
    Consider the set $[1^N]_M \subset \mathbb{C}[x_1,\dots,x_M]_{N}$ of homogeneous polynomials of degree $N$ with $M$ variables that factor into a product of homogeneous polynomials of degree 1.
    Furthermore, consider the operation
    \begin{equation}
    \begin{split}
        T_\gamma \colon [1^N]_M &\to \mathbb{C}[x_1,\dots,x_M]_{N-1}\\
        \prod_{k=1}^N p_k &\mapsto \sum_{i=1}^N \gamma_i \prod_{k=1, k\not = i}^N p_k
    \end{split}
    \end{equation}
    for $\gamma \in \mathbb C^N$.
    Then the set
    \[\{T_\gamma p \; : \; \gamma \in \mathbb C^N, p \in [1^N]_M\}\]
    is a true subset of $\mathbb{C}[x_1,\dots,x_M]_{N-1}$, provided that $M,N$ are large enough, i.e.,
    \begin{equation}
        \label{equation_theorem_noGo_relation_appendix}
        MN+N - \binom{M+N-2}{M-1} < 0.
    \end{equation}
\label{proposition_for_NoGo}
\end{theorem}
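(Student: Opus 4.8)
The plan is a dimension count. The idea is to realise the set $\{T_\gamma p:\gamma\in\mathbb{C}^N,\ p\in[1^N]_M\}$ as (a subset of) the image of an explicit polynomial map from an affine space of dimension $NM+N$, and then to observe that once $NM+N$ is strictly smaller than $\dim_{\mathbb{C}}\mathbb{C}[x_1,\dots,x_M]_{N-1}$, such an image can never exhaust $\mathbb{C}[x_1,\dots,x_M]_{N-1}$.

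First I would set up the parametrising map. For $v=(v_1,\dots,v_M)\in\mathbb{C}^M$, write $\ell_v(\boldsymbol{x})=\sum_{j=1}^M v_j x_j$ for the associated linear form, and define
\begin{equation}
  G\colon (\mathbb{C}^M)^N\times\mathbb{C}^N\longrightarrow\mathbb{C}[x_1,\dots,x_M]_{N-1},\qquad
  \big((v_k)_{k=1}^N,\gamma\big)\longmapsto\sum_{i=1}^N\gamma_i\prod_{\substack{k=1\\k\neq i}}^N\ell_{v_k}.
\end{equation}
By the very definition of the class $[1^N]_M$, each $p$ therein admits a factorisation $p=\prod_{k=1}^N\ell_{v_k}$ into linear forms, and for that factorisation $T_\gamma p=G\big((v_k)_k,\gamma\big)$; hence $\{T_\gamma p:\gamma\in\mathbb{C}^N,\ p\in[1^N]_M\}\subseteq\operatorname{Image}(G)$. (Strictly, $T_\gamma$ is defined on factored representatives rather than on $[1^N]_M$ itself, since rescaling or permuting the $\ell_{v_k}$ rescales or permutes $\gamma$; but ranging over all $\gamma$ makes the set well defined, and it in fact coincides with $\operatorname{Image}(G)$ away from the lower-dimensional locus where some $\ell_{v_k}$ vanishes. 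Only the displayed inclusion is needed.)

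Next comes the dimension obstruction. The source of $G$ is $(\mathbb{C}^M)^N\times\mathbb{C}^N\cong\mathbb{C}^{NM+N}$, and, written in the monomial basis, the components of $G$ are $D:=\binom{M+N-2}{M-1}=\dim_{\mathbb{C}}\mathbb{C}[x_1,\dots,x_M]_{N-1}$ polynomials $g_1,\dots,g_D$ in the $NM+N$ coordinates $v_{k,j},\gamma_i$. Hypothesis \eqref{equation_theorem_noGo_relation_appendix} is exactly the statement $NM+N<D$. Since $D$ then exceeds the transcendence degree $NM+N$ of $\mathbb{C}(v_{k,j},\gamma_i)$ over $\mathbb{C}$, the $g_1,\dots,g_D$ are algebraically dependent: there is a nonzero $F\in\mathbb{C}[y_1,\dots,y_D]$ with $F(g_1,\dots,g_D)\equiv0$. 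Consequently $\operatorname{Image}(G)$ is contained in the proper Zariski-closed set $\{F=0\}\subsetneq\mathbb{C}^D\cong\mathbb{C}[x_1,\dots,x_M]_{N-1}$, so $\{T_\gamma p\}$ is a proper subset, which is the assertion. (Alternatively, one may invoke Chevalley's theorem — the image of a morphism of affine varieties is constructible and its Zariski closure has dimension at most that of the source — to the same effect.)

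I expect the only genuinely delicate point to be the reduction in the first step: confirming that the mild ill-definedness of $T_\gamma$ on $[1^N]_M$ and the degenerate strata (a vanishing linear factor, coincident factors) do not let the target set escape $\operatorname{Image}(G)$. The remaining ingredients — the combinatorial identity $\dim_{\mathbb{C}}\mathbb{C}[x_1,\dots,x_M]_{N-1}=\binom{M+N-2}{M-1}$ and the transcendence-degree bound on the image of a polynomial map — are routine, and together they convert the inequality ``source dimension $<$ target dimension'' into the stated relation \eqref{equation_theorem_noGo_relation_appendix}.
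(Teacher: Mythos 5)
Your proof is correct and follows essentially the same dimension-counting strategy as the paper: both compare the $MN+N$ parameters of $T_\gamma p$ against $\dim_{\mathbb{C}}\mathbb{C}[x_1,\dots,x_M]_{N-1}=\binom{M+N-2}{M-1}$ and conclude that the image of the parametrising map cannot fill the target space. If anything, your justification of the decisive step---algebraic dependence of the $D$ component polynomials in $NM+N$ variables via a transcendence-degree count, confining the image to a proper Zariski-closed subset---is more rigorous than the paper's appeal to the parameters entering ``in a continuously differentiable way,'' and your remark on the well-definedness of $T_\gamma$ on factored representatives addresses a point the paper leaves implicit.
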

 
\begin{proof}
    Any polynomial in $[1^N]_M$ can be presented in the form $p = \prod_{k=1}^N (\alpha_{1,k} x_1 + \dots + \alpha_{M,k} x_M)$ with $\alpha_{j,k} \in \mathbb C$.
    Its image under $T_\gamma$ for $\gamma \in \mathbb C^N$ has the form
    \[
        T_\gamma p = \sum_{i=1}^N\prod_{k=1, k \not = i}^N \gamma_i (\alpha_{1,k} x_1 + \dots + \alpha_{M,k} x_M).
    \]
    The expression contains $MN+N$ parameters $\alpha_{j,k},\gamma_i \in \mathbb C$.
    The space of degree $N-1$ polynomials $\mathbb{C}[x_1,\dots,x_M]_{N-1}$ has dimension
    \begin{align*}
        \dim_{\mathbb C} &(\mathbb{C}[x_1,\dots,x_M]_{N-1})\\
        &=\binom{(M-1)+(N-1)}{M-1}
        =\binom{M+N-2 }{M-1}\\
        &\sim \frac 1 {\sqrt{2\pi}}\frac{(M+N-2)^{M+N-\frac 32}}{(M-1)^{M-\frac 12}(N-1)^{N-\frac 12}}.
    \end{align*}
    Here the last relation denotes an asymptotic expansion which has been obtained from the Sterling approximation of binomial coefficients as $M,N$ both tend to infinity.
    As the dimension grows asymptotically faster than the number of parameters $MN+N$ of $T_\gamma p$ and the parameters enter in a continuously differentiable way, the set
    \[
        \{T_\gamma p \; : \; \gamma \in \mathbb C^N, p \in [1^N]_M\}
    \]
    cannot cover all of $\mathbb{C}[x_1,\dots,x_M]_{N-1}$.
\end{proof}

Theorem \ref{proposition_for_NoGo} has the following implication for a linear interferometer.

\begin{corollary}
    If
    \begin{equation}
        \label{equation_theorem_noGo_relation}
        MN+N - \binom{M+N-2}{M-1}   <0
    \end{equation}
    holds true, then a linear interferometer with a single annihilation operation cannot produce all quantum states of $N-1$ photons with $M$ modes from states of type $[1^N]_M$.
    Condition Eq. \eqref{equation_theorem_noGo_relation} is asymptotically fulfilled when the number of photons $N$ and the number of modes $M$ both become large.
\label{theorem_noGoSingleAnnihilation}
\end{corollary}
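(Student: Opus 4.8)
The plan is to reduce this corollary to the purely algebraic Theorem~\ref{proposition_for_NoGo}. First I would fix the dictionary: under the polynomial representation of Sec.~\ref{sec:factorizedform}, an $(N-1)$-photon, $M$-mode state corresponds, up to a non-zero scalar, to an element of $\mathbb C[x_1,\dots,x_M]_{N-1}$, so ``producing all such states'' is equivalent to the reachable set covering all of $\mathbb C[x_1,\dots,x_M]_{N-1}$. Then I would describe the most general operation of the stated type acting on an input $\ket{\Psi}\in[1^N]_M$ as $U_2\,\hat a_k\,U_1\ket{\Psi}$, where $U_1,U_2$ are the unitaries of the linear networks placed before and after the single annihilation and $\hat a_k$ is annihilation in one mode (annihilation of a photon in a superposition mode is absorbed into $U_1,U_2$).

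Next I would run the reduction in three steps. By Theorem~\ref{theorem_2}, $U_1\ket{\Psi}$ is still in $[1^N]_M$; writing it as $p=\prod_{i=1}^N p_i$ with linear forms $p_i(\boldsymbol x)=\sum_j\gamma^i_j x_j$, Eq.~\eqref{eq:multivar_state_subtract} shows that $\hat a_k$ sends $p\mapsto\sum_{i=1}^N\gamma^i_k\prod_{j\neq i}p_j=T_\gamma p$ with $\gamma=(\gamma^1_k,\dots,\gamma^N_k)\in\mathbb C^N$. Finally, $U_2$ acts on $\mathbb C[x_1,\dots,x_M]_{N-1}$ by an invertible linear substitution of variables, and since that substitution commutes with the $T_\gamma$ construction, $U_2(T_\gamma p)=T_\gamma\!\left(\prod_i U_2 p_i\right)$ with $\prod_i U_2 p_i$ again a product of $N$ linear forms, hence again in $[1^N]_M$. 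Therefore the whole reachable set is contained in $\{T_\gamma p:\gamma\in\mathbb C^N,\ p\in[1^N]_M\}$, which by Theorem~\ref{proposition_for_NoGo} is a proper subset of $\mathbb C[x_1,\dots,x_M]_{N-1}$ whenever Eq.~\eqref{equation_theorem_noGo_relation} holds. This establishes the first assertion.

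For the asymptotic claim I would show that $\binom{M+N-2}{M-1}$ eventually dominates $MN+N$. For $M\geq3$ (automatic when $M$ is large), unimodality of binomial coefficients gives $\binom{M+N-2}{M-1}\geq\binom{M+N-2}{2}=\tfrac12(M+N-2)(M+N-3)$, while $MN+N\leq\tfrac14(M+N)^2+N$ by the AM--GM inequality; hence $\binom{M+N-2}{M-1}-(MN+N)\geq\tfrac14(M+N)^2-O(M+N)\to\infty$ as $M,N\to\infty$. (The same conclusion also follows directly from the Stirling asymptotics displayed inside the proof of Theorem~\ref{proposition_for_NoGo}.) The case $M=2$ is irrelevant here, since then every $N$-photon state already lies in $[1^N]_2$ and the condition genuinely fails.

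The step I expect to need the most care is the reduction of ``a linear interferometer with a single annihilation'' to the operator $T_\gamma$: one must verify that arbitrary pre- and post-networks do not enlarge the reachable set beyond $\{T_\gamma p:\gamma\in\mathbb C^N,\ p\in[1^N]_M\}$, which rests precisely on the unitary invariance of the class $[1^N]_M$ (Theorem~\ref{theorem_2}) together with the compatibility of a post-network with the $T_\gamma$ sum. Everything after that is bookkeeping plus the dimension count already carried out in Theorem~\ref{proposition_for_NoGo}.
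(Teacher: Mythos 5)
Your proposal is correct and follows the same route as the paper, which states the corollary as an immediate implication of Theorem~\ref{proposition_for_NoGo}; your reduction of $U_2\,\hat a_k\,U_1$ to the operator $T_\gamma$ (via the unitary invariance of Theorem~\ref{theorem_2}, the identification of $\hat a_k$ with $T_\gamma$ through Eq.~\eqref{eq:multivar_state_subtract}, and the compatibility of the post-network's variable substitution with the $T_\gamma$ sum) supplies exactly the details the paper leaves implicit. Your elementary unimodality/AM--GM bound for the asymptotic claim is a valid substitute for the Stirling expansion invoked inside the paper's proof of Theorem~\ref{proposition_for_NoGo}.
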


\begin{widetext}
\begin{remark}
    For concreteness, some values of
    \[
    MN+N-\dim_{\mathbb C} (\mathbb{C}[x_1,\dots,x_M]_{N-1})
    \]
    are reported in the following:
    \begin{equation}
        \begin{bmatrix}
            & M=2 &M=3&M=4&M=5&M=6&M=7\\
            N=2 & 4 & 5 & 6 & 7 & 8 & 9 \\
            N=3 & 6 & 6 & 5 & 3 & 0 & -4 \\
            N=4 & 8 & 6 & 0 & -11 & -28 & -52 \\
            N=5 & 10 & 5 & -10 & -40 & -91 & -170 \\
            N=6 & 12 & 3 & -26 & -90 & -210 & -414 \\
            N=7 & 14 & 0 & -49 & -168 & -413 & -868 \\
            N=8 & 16 & -4 & -80 & -282 & -736 & -1652 \\
            N=9 & 18 & -9 & -120 & -441 & -1224 & -2931 \\
            N=10 & 20 & -15 & -170 & -655 & -1932 & -4925 \\
            N=11 & 22 & -22 & -231 & -935 & -2926 & -7920 \\
            N=12 & 24 & -30 & -304 & -1293 & -4284 & -12280 \\
        \end{bmatrix}.
    \end{equation}
\end{remark}

    At negative values, the statement of Proposition \ref{proposition_for_NoGo} and Theorem \ref{theorem_noGoSingleAnnihilation} hold true.
    No claim is made for ranges of $(M,N)$, where values are non-negative.
\end{widetext}

\section{Numerical optimization}
\label{appendix_numerical_procedure}

    Here we present the numerical scheme for studying the desired state generation in a scheme depicted in Fig. \ref{fig:scheme_subtraction_projection}(a).
    This scheme consists of two parts:
    (i) the generation of the state $\ket{\psi}_I$ in the class $[1^{N+1}]_M$ via the photon additions and unitary transformations, and (ii) generation of a final state $\ket{\psi}_T \propto \hat{a} \ket{\psi}_I$  via the single photon subtraction.
    Therefore, the task of obtaining the target state $\ket{\psi}_T$ is reduced to the determination of proper initial state $\ket{\psi}_I$ in $[1^{N+1}]_M$, which can be done with the use of numerical multivariate optimization methods.

    Our optimization scheme is the following:
    \begin{enumerate}
        \item
        Instead of determination of matrices $U_n$ in Fig. \ref{fig:scheme_subtraction_projection}(a), we use the multiset $\Gamma$ ($N\times M$ complex parameters $\gamma^i_{j}$) to parametrize the state $\ket{\psi(\Gamma)}_I $, Eq.\eqref{eq:state_in_c}.
        \item
        The action of annihilation operator gives us the state $\ket{\psi_a(\Gamma)} = \mathcal{N} \hat{a} \ket{\psi(\Gamma)}_I$.
        \item
        As a loss function we use $\mathcal{L}(\Gamma) = 1- \mathcal{F} $, where $\mathcal{F} = |\braket{ \psi_a(\Gamma) |\psi_T}|^2$ is a fidelity function.
        \item
        For the numerical realization we use the Optax library \cite{deepmind2020jax} of the JAX framework \cite{jax2018github};
        the optimal parameters  $\gamma^i_{j}$ for a given target state $\ket{\psi}_T$ are obtained with a stochastic gradient descend based optimizer.
        \item
        So far as the gradient-based methods depend on initial conditions, in our simulations the optimization was performed for 5000 randomly generated initial conditions for multisets $\Gamma$.
    \end{enumerate}

    Note that the single subtraction does not allow to generate any desired state (see the proof in Appendix \ref{appendix_subt_nonunversal}); therefore, the loss function can be nonzero.

    For the projection scheme, Fig. \ref{fig:scheme_subtraction_projection}(b), the optimization procedure is similar to the subtraction one.
    The main difference is the intermediate state $\ket{\psi}_I$, which belongs to the class $[1^{N+1}]_{M+1}$.

\bibliography{references.bib}

\end{document}